\documentclass[12pt]{iopart}
\usepackage[utf8]{inputenc}
\usepackage[sc,osf]{mathpazo}
\usepackage[normalem]{ulem}
\usepackage[T1]{fontenc}
\expandafter\let\csname equation*\endcsname\relax
\expandafter\let\csname endequation*\endcsname\relax
\usepackage{amsmath}
\usepackage{amsthm}
\usepackage{graphics}
\usepackage{graphicx}
\usepackage{float}
\usepackage{braket}
\usepackage[colorlinks=true,linkcolor=blue,
urlcolor=blue,citecolor=blue]{hyperref}
\newtheorem{theorem}{Theorem}
\newtheorem{corollary}{Corollary}

\begin{document}
\title{
Necessary condition for information transfer under
simulated parity-time-symmetric evolution
}

\author{Leela Ganesh Chandra Lakkaraju$^{1}$, Shiladitya Mal$^{1, 2, 3, 4}$, Aditi Sen(De)$^{1*}$}
	
	\address{$^1$ Harish-Chandra Research Institute, A CI of Homi Bhabha National Institute,  Chhatnag Road, Jhunsi, Prayagraj - 211019, India\\
 $^2$ Physics Division, National Center for Theoretical Sciences, Taipei 10617, Taiwan\\
	$^3$ Department of Physics and Center for Quantum Frontiers of Research and Technology (QFort),
National Cheng Kung University, Tainan 701, Taiwan\\
$4$ Centre for Quantum Science and Technology, Chennai Institute of Technology, Chennai 600069, India

\ead{$^*$aditi@hri.res.in}
	}
	
\begin{abstract}
Parity-time $(\mathcal{PT})$ symmetric quantum theory 
can broaden the scope of quantum dynamics beyond unitary evolution which may lead to numerous counter-intuitive phenomena, including single-shot discrimination of non-orthogonal states, faster evolution of state than the standard quantum speed limit,  and violation of no-signaling principle.
On the other hand, $\mathcal{PT}$-symmetric evolution can be realized as reduced dynamics of a subsystem in real experiments within the scope of standard QT. In this experimental setup, if one side of a composite system is evolved according to a $\mathcal{PT}$-symmetric way, a non-trivial information transfer can happen, i.e., the operation performed at one side can be gathered by the other side. By considering an arbitrary shared state between two parties situated in two distant locations and arbitrary measurements, we show that the $\mathcal{PT}$-symmetric evolution of the reduced subsystem at one side is not sufficient for this information transfer to occur. Specifically, 
we prove that the information transfer can only happen when the  density matrix and the corresponding measurements contain complex numbers. Moreover, we connect the entanglement content of the shared state with the efficacy of information transfer. We find evidence that the task becomes more efficient with the increase of dimension.

\end{abstract}

\maketitle

\section{Introduction}
Among the postulates of quantum mechanics, the requirement of the Hermiticity property for the observables has the least support from the perspective of physical considerations. It was pointed out that a condition dictated by the fundamental discrete symmetry of the world, i.e., space-time reflection symmetry may lead to a kind of quantum theory \cite{bender'98, bender'02} which is commonly known as parity-time $(\mathcal{PT})$-symmetric quantum theory. A Hamiltonian in this theory has real eigenvalues in the symmetry unbroken phase and pairwise complex eigenvalues in the symmetry broken phase while in between at the exceptional point, a new type of critical behavior emerges \cite{bender'99, dorey'01, dorey'07, bender'68, bender'69, heiss'12, bender'17, ashida'17}. Note that the Hamiltonian commutes with the $\mathcal{PT}$ operator (product of parity and time operator) in both phases. However, like the Hermitian quantum mechanics, in the unbroken phases, Hamiltonian and the $\mathcal{PT}$ operator possess a common set of eigenvectors which is not the case in the broken phase, thereby breaking the $\mathcal{PT}$-symmetry for the eigenvectors. 

This complex extension of quantum theory not only solves a long-standing problem of negative norm 'ghost' state of the  renormalized Lee model in quantum field theory \cite{bender'07} but also supposed to enlarge the scope of allowed dynamics, thereby culminating in the exploration of varieties of rich phenomena. Throughout the past decade, $\mathcal{PT}$-symmetric Hamiltonian has been realized in  classical systems like electronic circuit \cite{nbender'13}, waveguide \cite{aguo'09, doppler'16}, microcavity \cite{peng'14, chang'14} and in photonics, capitalizing on the fundamental structure of balanced loss and gain, it creates almost a new paradigm of  devices, eg., synthetic photonic lattices \cite{synthetic'12}, single-mode laser \cite{feng'14, miao'16}, high-sensitivity sensors \cite{wchen'17}, wireless power transfer \cite{wlespower'17} to name a few. 

In the quantum $\mathcal{PT}$-symmetric domain,  behaviors qualitatively distinct from the classical domain have been discovered, which include wormhole-like behavior in quantum brachistochrone problem \cite{bbrody'07}, single-shot discrimination of non-orthogonal states \cite{bbrody'13}. Several no-go theorems, valid in standard quantum mechanics, are shown to be disturbed in $\mathcal{PT}$-symmetric quantum theory \cite{ynchen'14, pati'14, acta'16}. More prominently, the weaker condition, no-signaling, which is desired to be satisfied in any physical theory, has been shown to be violated in the Gedanken experimental set-up \cite{lee14}. Later nature of the apparent violation of fundamental no-go theorems has been discussed in detail \cite{nori'19}.

Moreover, it was shown that
if one modifies the inner product, known as the charge-parity-time $(\mathcal{CPT})$ inner product, the $\mathcal{PT}$ symmetric dynamics can restore the standard quantum mechanics \cite{brody'16, japar'17, nori'19, footnote} (cf. \cite{mostafa'01, mostafa'07, sarah'15}). It is noticed that an experimentalist cannot distinguish between $\mathcal{PT}$-symmetric closed evolution and $\mathcal{CPT}$-modified theory, as the metric operator is not observable and there are physically inaccessible degrees of freedom that do not let local observers alter their inner products \cite{brody'16, footnote}.
 Several  interesting recent works, which proved to gain an advantage under the $\mathcal{PT}$-symmetric Hamiltonian by considering the original Dirac inner product,  include the increase of bipartite entanglement \cite{chen'14},  \(\mathcal{PT}\)-symmetric Liouvillian dynamics \cite{caspel'18, dangel'18}, the maximal violation of Leggett-Garg inequalities \cite{Usha'21}.

The selection of the Dirac inner product in our study 
also facilitates the calculation of experimentally measurable probabilities. Notably, the adoption of this formalism has been a common choice in various investigations concerning probability calculations \cite{lin2016non, PhysRevX.4.041001} and the determination of observables in non-Hermitian $\mathcal{PT}$-symmetric systems.

On the other hand, the experimental realization of an effective $\mathcal{PT}$-symmetric evolution of a subsystem adopts the open-system framework.
 Employing Naimark's dilation \cite{gunther'08}, this simulation of $\mathcal{PT}$-symmetric evolution of a local system can be embedded in a subspace of a higher-dimensional Hermitian Hamiltonian \cite{gunther'08}, thus enabling experimental simulation \cite{guang17, guang'20}. See also  Ref. \cite{LeeWu'19}, in which $\mathcal{PT}$-symmetric evolution has been implemented through weak measurement formalism. This is within the scope of standard quantum formalism, which necessitates the use of Dirac inner product. The feasibility of simulating $\mathcal{PT}$-symmetric evolution as a reduced dynamics of a subsystem opens up a better insight and new potential applications as well. In this case, one cannot have advantage beyond standard quantum theory, contradicting with fundamental no-go theorems, but can have other advantages. For example,  it was  shown that for information retrieval task, simulated $\mathcal{PT}$-symmetric evolution provides better control over other existing alternative methods like the quantum Zeno effect or dynamical decoupling method \cite{Kawabata'17}.
 
 In this open system approach, there is no violation of no-signaling principle. 
In particular, it has been shown \cite{guang17} that there is a certain information transfer occurs between Alice and Bob under $\mathcal{PT}$-symmetric non-Hermitian evolution. Specifically, when Alice performs an operation by choosing it randomly on her part of the shared composite system, Bob could infer her choice of input. This may appear to be a violation of no-signalling, but it is not, since the information can only be inferred when Bob gets to know that Alice's subsystem has successfully undergone $\mathcal{PT}$-symmetric evolution. We will work in this open quantum system paradigm, which is within the scope of standard quantum theory, and, therefore,  the Dirac inner product can be a natural choice.
 
We now ask the following question -- {\it is $\mathcal{PT}$-symmetric evolution a sufficient condition for the information transfer between Alice and Bob?}
 
 In this work, we answer this question by providing  general conditions for information transfer by considering an arbitrary  two-qubit state between Alice and Bob;  Alice initially performs an arbitrary operation and then followed by a \(\mathcal{PT}\)-symmetric evolution;  Bob then performs measurement on the qubit locally and gains information about the arbitrary operation of Alice. We find that only if the shared state or the measurement operators contain imaginary numbers,  the $\mathcal{PT}$-symmetric evolution leads to the information transfer after post-selection.  It also shows that such a restriction on the shared state implies  that the state might or might not be entangled. 
 Note that in the \(\mathcal{PT}\)-symmetric framework, we use the Dirac inner product to obtain the results. We further show that increasing dimension can lead to some advantages in the distinguishability process of quantum states \cite{Dadv1,Dadv2,  Dadv3} and like a two-qubit scenario, similar conditions for two-qutrits can be found in terms of correlators for  information transfer. 
 
In addition, following the experimental setup shown in Ref. \cite{guang17}, we provide a proof-of-principle design of a genuine randomness generation protocol. We show that a  pseudo-random bit string consumed at Alice's side, from a source provided by a supplier who may be an eavesdropper, can be transformed into a genuinely random bit at Bob's side. The genuine string would be half the length of the string at Alice's side. This is possible in an ideal scenario only when $\mathcal{PT}$-symmetric evolution occurs on the underlying complex Hilbert space.

The paper is organized as follows. In Sec. \ref{sec_set}, we describe the prerequisites that are required to present the main results. We discuss in detail the necessity of imaginary numbers in quantum states and measurements on information transfer under $\mathcal{PT}$-symmetric evolution in Sec. \ref{sec_effectasol}. The consequence of Bob's reduced state due to local $\mathcal{PT}$-symmetric operation on Alice's end is illustrated in \ref{sec_effectreduce}. In Sec. \ref{sec_high}, we discuss the enhancement in information transfer when a higher dimensional $\mathcal{PT}$-symmetric system is considered. Before concluding in Sec. \ref{sec_conclu}, the application of the $\mathcal{PT}$-symmetric dynamics towards quantum randomness generation is presented in Sec. \ref{sec_random}.

\section{ $\mathcal{PT}$-symmetric Dynamics and its Quantum Simulation}
\label{sec_set}

Let us begin with a short discussion of a few preliminary notions relevant to the subsequent sections.

\emph{$\mathcal{PT}$-symmetric dynamics.} In the $\mathcal{PT}$-symmetric evolution,  the Hamiltonian satisfies $H=H^{\mathcal{PT}}$, i.e., the Hamiltonian commutes with parity and time reversal operators. In the suitable representation pertaining to our analysis, the parity operator is given by $\mathcal{P}=\sigma_x$. The
time reversal operator, $\mathcal{T}$, actually represents complex conjugation part. 
For the two dimensional Hamiltonian
considered in this work, it follows that $\mathcal{T}\sigma_y\mathcal{T}^{-1} = -\sigma_y$, whereas $\sigma_x$ and $\sigma_z$ remain invariant under the action of \(\mathcal{T}\).

Mathematically, it can be seen that $[H,\mathcal{PT}]=0$ such that $(\mathcal{PT})H(\mathcal{PT})^{-1} = H$. It is more convenient to check that $\mathcal{P} H \mathcal{P} = H^\dagger$, with \(H^{\dagger}\) being the Hermitian conjugation of \(H\).
The $2 \times 2$ Hamiltonian, displaying $\mathcal{PT}$-symmetry,  is described \cite{bender_faster} as 
\begin{equation}
H_{general}^\mathcal{PT}(s,r,\alpha)=\left(\begin{array}{cc}
r e^{i \alpha} & s \\
s & r e^{-i \alpha}
\end{array}\right), \quad r, \, s, \, \alpha \in \mathrm{R},
\end{equation}
for which the eigenvalues are, $E_{ \pm}=r \cos (\alpha) \pm \sqrt{s^2-r^2 \sin ^2(\alpha)}$. This shows that when $s^2 > r^2 \sin ^2(\alpha)$, the Hamiltonian is in the unbroken phase with real eigenvalues, and when  $s^2 < r^2 \sin ^2(\alpha)$, the Hamiltonian is in the broken phase with complex eigenvalues.

The above Hamiltonian can be modified further as \cite{gunther_dilation} 
\begin{equation}
H_{general}^\mathcal{PT}(s,s,\alpha)=\cos(\alpha)I_2+s\left(\begin{array}{cc}
i \sin (\alpha) & 1 \\
1 & -i \sin (\alpha),
\end{array}\right)
\end{equation}
with \(r=s\). 
The choice of parameters  sets the imaginary part of the eigenvalues to be zero, also known as exact $\mathcal{PT}$-symmetry.  
Ignoring the constant scaling in the first term, we consider the  two-dimensional non-Hermitian $\mathcal{PT}$-symmetric Hamiltonian  \cite{lee14} as
\begin{equation}
H=s\left(\begin{array}{cc}
i \sin \alpha & 1 \\
1 & -i \sin \alpha
\end{array}\right),
\label{hamiltonian}
\end{equation}
where $s$ is the scale factor and $\alpha$ is the  non-Hermiticity parameter, such that when $\alpha = n \pi$, $H$ reduces to a Hermitian Pauli matrix ($\sigma_x$). The parameter $\alpha$ ranges in $ [\frac{-\pi}{2},\frac{\pi}{2}]$ such that $|\alpha| = \frac{\pi}{2}$ is the exceptional point. At the exceptional point,  the eigenvectors coalesce together. It is to be noted that the parity operator ($\mathcal{P}$) has $\pm 1$ eigenvalues since $\mathcal{P}^2 = I$. The three Pauli operators have the same property although $\sigma_x$ turns out to display the relevant symmetry with respect to the Hamiltonian considered here.   

Note that the above two-dimensional Hamiltonian has been used in different contexts ranging from the effective Hamiltonian in the open quantum systems to the simulation of \(\mathcal{PT}\)-symmetric systems with photonic devices \cite{ptsame1,ptsame2,ptsame3,ptsame4}.  Let us discuss how this Hamiltonian can represent an effective Hamiltonian describing the evolution of a system that interacts with an external environment. In the context of the system being described, let us consider the system Hamiltonian denoted by \(H^S\), which is connected to an external environment characterized by a decay rate, \(\Gamma\). The dynamics of the system can be described by the Lindblad master equation \cite{petrubook} as
\[
\frac{d\rho}{dt} = -i[H^S,\rho] + \frac{\Gamma}{2}[L\rho L^\dagger - \frac{1}{2}(L^\dagger L\rho + \rho L^\dagger L)],
\]
where \(\rho\) represents the density operator of the system and \(L\) denotes the Lindblad operators. This equation accounts for the unitary evolution governed by the system Hamiltonian as well as the dissipative effects caused by the coupling to the environment. To simplify this expression, we introduce the effective Hamiltonian, \(H_{\text{eff}}\),  as
\[
H_{\text{eff}} = H^S - \frac{i\Gamma}{2}L^\dagger L.
\]
The effective Hamiltonian captures the non-Hermitian nature of the system due to the dissipative processes. It incorporates the original system Hamiltonian while accounting for the decay induced by the environment. Additionally, the term \(\frac{\Gamma}{2}L\rho L^\dagger\) in the Lindblad master equation is referred to as the jump operation. It represents the contribution to the system dynamics caused by quantum jumps, which occur due to the interaction with the environment. 

However, in the semiclassical limit, it is often permissible to neglect the effects of the jump operation \cite{fazio_2018}. Consequently, the evolution of the system is governed by the non-Hermitian effective Hamiltonian \(H_{\text{eff}}\). By considering the above simplifications, we can focus on the non-Hermitian Hamiltonian to study the behavior and evolution of the system within the semiclassical approximation.

We now illustrate that the $\mathcal{PT}$-symmetric Hamiltonian in Eq. (\ref{hamiltonian}) can be obtained as the effective description of a system under the action of an amplitude damping channel. Here the system Hamiltonian of a single qubit is given by $H^S = \sigma_x^S $. Physically, it can be realized as a single spin interacting with an electromagnetic field such that the spontaneous emission takes place due to the excited nature of the spin.  The corresponding Lindblad master equation can be read as \cite{open_book} $\frac{d\rho}{dt}=-i[\sigma_x^S ,\rho]+\gamma (\sigma^-\rho\sigma^+-\frac{1}{2}\{\sigma^+\sigma^-,\rho\})$ and the corresponding effective Hamiltonian is given by

\begin{align}
     \nonumber H_{eff} &=  H^S-\frac{i\Gamma}{2}L^\dagger L \\\nonumber& =\sigma_x-\frac{i\Gamma}{2}\sigma^+\sigma^-\\&=\sigma_x-\frac{i\Gamma}{4}\sigma_z,
 \end{align}
where $\sigma^\pm = \sigma_x \pm i \sigma_y$. We drop the $S$ index for brevity. In our calculations, instead of using the dissipation term \( \frac{\Gamma}{4}\), we choose the non-Hermiticity parameter as  \( \frac{\Gamma}{4} = \sin\alpha\). It can be understood that the effective Hamiltonian becomes more non-Hermitian as the strength of the system-bath interaction increases.(see Ref. \cite{fazio_2018}) for similar procedure).

Such a description also provides novel insights into the evolution of the system and can be responsible to obtain some advantages out of quantum  devices. Two such examples include enhanced sensitivity \cite{Hodaei2017} at the exceptional point and thermodynamic advantage in thermal machines \cite{PRXQuantum.2.040346, lakkaraju_konar_de_nhqb}. Moreover, a non-trivial topological effect, known as the non-Hermitian skin effect,  where all the eigenvalues are localized in the boundary  emerges in non-Hermitian systems, thereby violating bulk-boundary correspondence seen in Hermitian counterparts \cite{Tanmoy_das_review_skin_effect_2019}.

\noindent The initial state of the system, \(\rho(0)\), is evolved to a final state,
\begin{equation}
\rho(t)=\frac{\exp (-i H t) \rho(0) \exp(i H t)}{Tr [\exp (-i H t) \rho(0) \exp (i H t)]}.
\end{equation}

Let us now describe  the experimental set-up proposed in  Ref. \cite{guang17} for simulating the effective $\mathcal{PT}$-symmetric Hamiltonian in the open-system framework. A pair of space-like separated photons are generated by the parametric down-conversion which are then sent to two different protagonists say, Alice and Bob via two different channels. After Alice encodes a single bit of information on her photon by randomly choosing between two operations ($A_\pm$), it is evolved according to the $\mathcal{PT}$-symmetric Hamiltonian.  It is implemented by evolving the photon together with an auxiliary system via conventional quantum gate operation and subsequently, performing a measurement on the auxiliary system. Depending upon a particular post selection of the auxiliary system, Alice's photon evolved according to $\mathcal{PT}$-symmetric Hamiltonian. It is to be noted here that the entire protocol follows the  standard quantum theory and no genuine $\mathcal{PT}$-symmetric Hamiltonian is realized although it can be  simulated. On the other hand, Bob's photon is evolved according to the identity channel. Finally, Alice and Bob measure  locally on their subsystems simultaneously and record some outcomes.

\emph{Let us now describe the no-signaling condition and the appropriate figure of merit for the information transfer conditioned upon post-selection.} As a consequence of relativistic causality, information cannot be transferred faster than light between two space-like separated observers, i.e., each space-like separated  party cannot predict the operation choice of the other from his/her own measurement statistics. This restriction on allowed joint probabilities is known as no-signaling condition in the literature \cite{RevModPhys.86.419}. 
Let us suppose now that two observers, Alice and Bob, share a maximally entangled state, $|\psi^+\rangle = \frac{|00\rangle+|11\rangle}{\sqrt{2}}$  and 
 one of the qubits of a shared maximally entangled state, at Alice's end,   is evolved under $\mathcal{PT}$-symmetric effective Hamiltonian after a random operation ($A_{\pm}$), i.e.,  either $\mathbb{I}$ or $\sigma_x$ has been performed at Alice's part   \cite{lee14}. After these operations, initially shared maximally entangled state, $|\psi^+\rangle$, transforms to $\psi_f^{\pm}$ as
$\left|\psi_{f}^{\pm}\right\rangle =\left[(U(t) \otimes I) . (A_{\pm} \otimes I)\right]|\psi^+\rangle.$

Finally, Alice and Bob perform measurement in $|\pm_y\rangle\langle\pm_y|$ basis, (i.e., eigenvectors of $\sigma_y$) on their respective subsystems to obtain joint probability,
\begin{align}
P(a,b|A_{\pm},B)=\langle\psi_f^{\pm}|(|a\rangle\langle a|\otimes |b\rangle\langle b|)|\psi_f^{\pm}\rangle,
\end{align}
where $a$ and $b$ are measurement outcomes of Alice and Bob respectively (for details see  \ref{non-maximal_calc}).

Let us now define the appropriate figure of merit to quantify information transfer as
\begin{align}
\label{eq_asol}
P_+ - P_-=\sum_a P\left(a, b \mid A_{+}, B\right)-\sum_a P\left(a, b \mid A_{-},B\right).
\end{align} 
If the difference vanishes,  the no-signalling condition is preserved and otherwise, information about Alice's operation choice can be inferred by Bob by looking at the measurement statistics. A similar condition also holds for interchanging Alice and Bob. In a similar spirit, the satisfaction of no-signalling principle implies that Bob's reduced system cannot be affected by space-like separated Alice's operation and vice-versa \cite{ujjwalashmita}.
We can use the same condition as a figure of merit for information transfer and check if Bob can infer information about Alice's arbitrary operation setting (see \ref{sec_effectreduce}). In previous works, a maximally entangled state has been used to illustrate the result. We use an arbitrary state and arbitrary measurement to establish a condition for information transmission.

\section{Information transfer under $\mathcal{PT}$-symmetric Hamiltonian requires complex numbers}
\label{sec_effectasol}
In this section,  we show that information transfer under $\mathcal{PT}$-symmetric Hamiltonian never happens if the states and measurements contain only real numbers. 
 
To illustrate it, going beyond maximally entangled shared states, let us consider (i) non-maximally entangled pure states, 
(ii) the Werner state \cite{Werner}, given by $\rho^W = p |\psi^+\rangle\langle\psi^+| + \frac{1-p}{4}I_4$,  with \(|\psi^+\rangle\) and \(I\) being the maximally entangled state and white noise respectively, and
(iii) arbitrary two-qubit density matrix. 

Notice that the set of states in (i) and (ii) is the subset of (iii). However, before considering the general framework, we want to observe whether the protocol of information transmission with maximally entangled states can be translated to other classes of states or not. 
In this respect, natural choices are a set of states considered in (i) and (ii). In particular, entanglement increases monotonically with the varying parameter of non-maximally entangled pure states while the Werner state shows interesting entanglement features which cannot be detected via Bell inequalities \cite{Werner}. 

Furthermore, in an experiment,  the states typically produced are noisy and are close to a maximally entangled state which can be well approximated by the Werner state. 

Interestingly, such a general framework provides us a necessary condition for information transfer i.e., we find that  if states and observables both contain only the real numbers, information gain of Bob about Alice's random operations cannot  happen. Specifically, it means that  the operators and states should contain complex numbers, in order for information transmission to occur in the presence of $\mathcal{PT}$-symmetric evolution.
 
Before presenting the main results for arbitrary two-qubit states with arbitrary measurement, let us illustrate our findings with shared non-maximally entangled,  Werner, and two-qubit states by performing the local projective measurement in the \(y\)-direction, i.e., in the basis \(|\pm_y\rangle = \frac{|0\rangle \pm i |1\rangle}{\sqrt{2}}\)  as described in Ref. \cite{lee14}.  

For the sake of completeness, we reiterate the choice of inner product.  In this work, the modified inner product $\mathcal{CPT}$ \cite{brody'16, japar'17, nori'19, mostafa'01, mostafa'07, sarah'15} is not used when calculating probabilities and expectation values; instead, the conventional Dirac inner product is used. This is because we conduct our study under an open-system paradigm of simulating the $\mathcal{PT}$-symmetric Hamiltonian. In this setting, the evolution of the entire system (i.e., system along with environment) is governed by unitary dynamics and the reduced system part evolves under a $\mathcal{PT}$-symmetric Hamiltonian. As this scenario is well within the standard quantum formalism, the choice of inner product is obvious. On the other hand, the $\mathcal{CPT}$ inner product must be considered if one works within a closed system framework having $\mathcal{PT}$-symmetry.

\emph{Non-maximally entangled  and Werner  states under $\mathcal{PT}$-symmetric local evolution. }
Suppose Alice and Bob initially share the  state, 
\(|\psi^{+}\rangle = \frac{\beta|++\rangle +\gamma|--\rangle}{\sqrt{\beta^2 + \gamma^2}}\),

where $|\pm\rangle$ are eigenstates of $\sigma_x$.
 Depending upon her random information, Alice applies either $A_{+}=I$ or $A_{-}=\sigma_{x}$ which is followed by the non-unitary evolution, $U(t)$, generated by the non-Hermitian Hamiltonian in Eq. (\ref{hamiltonian}),
 \begin{equation}
U(t) \equiv e^{-i t H}=\frac{1}{\cos \alpha}\left(\begin{array}{cc}
\cos \left(t^{\prime}-\alpha\right) & -i \sin t^{\prime} \\
-i \sin t^{\prime} & \cos \left(t^{\prime}+\alpha\right)
\end{array}\right).
\end{equation}
 Here $t^{\prime} = \frac{\Delta E}{2} t$, where \(\Delta E = E_{+} - E_{-}\) and we study the evolution for a specific time $t = \frac{\pi}{\Delta E}$, which leads to the evolution operator, given by

\begin{equation}
U(t) = \left(\begin{array}{cc}
\sin \alpha & -i \\
-i & -\sin \alpha
\end{array}\right).
\label{timeEvol}
\end{equation}
After the $\mathcal{PT}$-symmetric evolution, the normalized joint state becomes

\begin{equation}
\left|\psi_{f}^{\pm}\right\rangle =\left[(U(t) \otimes I) . (A_{\pm} \otimes I)\right]|\psi^+\rangle.
\end{equation}
It is to be noted again that here we follow the conventional inner product as in open system quantum simulation, everything follows according to standard quantum theory.
After Alice and Bob measure in the $|\pm_y\rangle$ basis, we obtain 
\begin{equation}
P_+ - P_- = \frac{8 \beta \gamma \sin \alpha}{\left(\beta^{2}+\gamma^{2}\right)(-3+\cos 2 \alpha)}.
\end{equation}
Here we observe that the difference vanishes when either $\alpha = n\pi$,  or when \(\beta =0\) or  \(\gamma = 0\). It implies that  when the state is a product state, even with \(\alpha\ne n \pi\), the non-Hermitian Hamiltonian does not lead to information transfer or as we know, when the evolution is unitary, Bob cannot get the information about Alice's operations, even if the shared state is entangled. 

In the case of Werner states, \(\rho^W\) \cite{Werner},
following the same prescription as before,  the information about Alice's subsystem  can be protected from Bob when
\begin{equation}
P_+-P_- = \frac{4 p \sin \alpha}{-3+\cos 2 \alpha} 
\end{equation}
vanishes, i.e. even when $\alpha \neq n\pi$ but $p = 0$, implying  the shared state is a maximally mixed state. 
Both the examples illustrate that maximally entangled states are not special and hence the condition of information transmission for shared arbitrary two-qubit states can be important to obtain which we will address now. 

\subsubsection{Condition for arbitrary two-qubit states} 

The canonical form of an arbitrary two-qubit state  can be written as
\begin{align}
\rho (m_i, m'_i, C_{ii}) &=  
\frac{1}{4}(I + \sum_{i=x,y,z} [ m_i\sigma_i \otimes I \nonumber\\
&+m_i^\prime I \otimes \sigma_i
+C_{ii} \sigma_i \otimes \sigma_i]),
\label{ArbiMix}
\end{align}
where \(m_i  = \mbox{tr} ((\sigma_i \otimes I) \rho)\), \( m^{\prime}_i = \mbox{tr} ((I \otimes \sigma_i )\rho)\) and \(C_{ii} = \mbox{tr} ((\sigma_i \otimes \sigma_i) \rho)\) denote magnetizations and correlators  respectively.
Following the similar procedure, we obtain that  when $C_{yy} = m_y m'_y$ or when \(C_{yy} = m_y =0\), the difference,
\begin{align}
\label{expression}
 P_+ - P_- &= \nonumber \\ 
 &\frac{2(C_{yy}-m_y  m_y^\prime )(-3+\cos 2 \alpha) \sin \alpha}{(-3+\cos 2 \alpha+4 m_y \sin \alpha)\left(1+2 m_y \sin \alpha+\sin^{2} \alpha\right)}
 \end{align}
vanishes even when the $\mathcal{PT}$-symmetric evolution occurs at Alice's port.
 It indicates  that the probabilities exclusively depend on  state variables in the $y$ direction  such as $m_y, m_y^\prime \text{ and } C_{yy}$. This is because the measurement is performed in the direction of $|\pm_y\rangle\langle\pm_y|$. Only the $\sigma_y$ component is contracting with the measurement setting, hence the corresponding variables are visible in the probabilities. Notice that at $|\alpha|=\frac{\pi}{2}$, the expression in the first parenthesis of the denominator  in Eq. (\ref{expression}) vanishes for $m_y = 1$ which exemplifies the effect of exceptional point and non-diagonalizability of the Hamiltonian.
 
The entire analysis is performed by considering the measurement of Bob in the $y$-direction. It may suggest that our result heavily depends on Bob's measurement choice. Thus such constraint on measurement is removed by performing the measurement in an arbitrary direction and we will now prove that information transfer requires imaginary terms in states as well as measurements.

\begin{theorem}
When Alice and her distant partner, Bob share a quantum state in which 
(a) Alice’s subsystem undergoes evolution governed by $\mathcal{PT}$-symmetric Hamiltonian and (b) if the shared state and the measurement performed by Alice and Bob contain complex numbers,     information about Alice’s operations, chosen randomly by her, can be predicted successfully by Bob with a finite probability. 

\end{theorem}

\begin{proof}
By considering an arbitrary two-qubit state, \(\rho (m_i, m_i^{'}, C_{ii})\), Alice applies \(I\) and \(\sigma_x\) (without loss of generality, we can choose such a fixed operation, since at the end, Alice performs arbitrary measurements), followed by an application of  local $\mathcal{PT}$-symmetry operation at Alice's node. Finally, Alice  locally  performs measurements on the basis
 $\{|\phi\rangle\langle\phi|,  |\phi^\perp\rangle\langle\phi^\perp|\}$ with $|\phi\rangle = \cos \frac{y}{2}|0\rangle+e^{iv} \sin(\frac{y}{2})|1\rangle$ and $|\phi^{\perp}\rangle = \sin \frac{y}{2}|0\rangle - e^{-iv} \cos \frac{y}{2}|1\rangle$ while Bob measures
 $\{|\varphi\rangle \langle \varphi|, |\varphi^\perp \rangle \langle \varphi^\perp|\} $ with $|\varphi\rangle = \cos \frac{z}{2} |0\rangle+e^{iu} \sin \frac{z}{2} |1\rangle$.  
The difference between the probability of obtaining   $|\varphi\rangle$ at Bob's end, reads
\begin{align}
P_+^a - P_-^a &=  \frac{-2 (7 \sin \alpha-\sin 3 \alpha)}{((-3+\cos 2 \alpha)^{2}-16 m_y^{2} \sin^{2} \alpha) } \nonumber \\
& \times [ m_y  m'_x \cos u \sin z 
+\sin u \sin z ( -C_{yy} + m_y m'_y) \nonumber \\
&+ m_y m'_z \cos z]
\label{eq_condikamal}
\end{align}
where superscript, ''a'', is used to indicate arbitrary measurements (see \ref{non-maximal_calc} and Sec. \ref{data_avail} for detailed calculation).

This proves that when the states and measurements contain real numbers, i.e., \(m_y\), $m_y^\prime$ and \(C_{yy}\) vanish or the phase in the measurement at Bob's node vanish, i.e.,  $u=0$, $\mathcal{PT}$-symmetry evolution does not lead to information transmission from Alice to Bob. Notice that when $m_y$, $m_y^\prime$ and $C_{yy}$ are non-zero, the presence of $\sigma_y$ operator containing imaginary numbers ensures the involvement of complex numbers in the state while non-vanishing $u$ in the measurement operators also include complex numbers with a non-zero imaginary component. Here, requirements of complex numbers mean that either states or measurements or both must contain complex entry independent of the choice of basis to express them. Moreover, we observe that the numerator vanishes when $7 \sin \alpha - \sin 3\alpha = 0$, which is when  $\alpha = n\pi$ and also when $\alpha = 2 \pi n \pm 2 i \tanh ^{-1}(\sqrt{3\pm2 \sqrt{2}}), \quad n \in \mathbb{Z}$.
 
  \end{proof}
 Interestingly, one can show that when $C_{yy}, m_y \text{ and } m_y^\prime$ vanish,  the shared state is unentangled since the state \(\rho\) coincides with its partial transposed state with respect to a subsystem, say, Alice \cite{Peres'96, Horodecki'96}.  Since the time evolution operator at the given specific time can be written as $U = \sin \alpha \sigma_z - i\sigma_x$,  we find that in \(P_+^a - P_-^a \),  \(m_x, m_z, C_{xx}\) and \( C_{zz}\) are not present.

\begin{corollary}
For the maximally entangled state and for the Werner states, arbitrary measurements at Bob's end are required to be complex for gathering information about Alice's operation by Bob. 
\end{corollary}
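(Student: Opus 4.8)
The plan is to obtain the Corollary as a direct specialization of the master formula for $P_+^a - P_-^a$ established in the Theorem (Eq.~(\ref{eq_condikamal})): I would insert the canonical parameters of the maximally entangled and Werner states into that expression and track which of Bob's measurement angles survive. The guiding observation is that both states are locally maximally mixed, so every single-site magnetization vanishes; in particular $m_y = 0$ (and $m'_y = 0$), and the only surviving $y$-datum is the two-body correlator $C_{yy}$.

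First I would read off the parameters of Eq.~(\ref{ArbiMix}) for the two states. Writing $|\psi^+\rangle$ in the $\sigma_x$ eigenbasis with $\beta = \gamma$ returns the standard Bell state, for which $m_i = m'_i = 0$ for all $i$ and $C_{yy} = -1$; mixing with white noise to form the Werner state keeps all magnetizations zero and merely rescales the correlator to $C_{yy} = -p$. Setting $m_y = 0$ in Eq.~(\ref{eq_condikamal}) collapses the numerator, since the two terms carrying $\cos u$ and $\cos z$ are each proportional to $m_y$, and it simultaneously reduces the denominator to $(-3+\cos 2\alpha)^2$ because its $m_y^2$ contribution disappears. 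What is left is
\[
P_+^a - P_-^a = \frac{2\,C_{yy}\,(7\sin\alpha - \sin 3\alpha)\,\sin u\,\sin z}{(-3+\cos 2\alpha)^2}.
\]

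I would then interpret this residue. Bob's projector $|\varphi\rangle = \cos\frac{z}{2}|0\rangle + e^{iu}\sin\frac{z}{2}|1\rangle$ has purely real amplitudes exactly when $e^{iu} = \pm 1$, i.e. when $u \equiv 0 \pmod{\pi}$, and in that case $\sin u = 0$ forces $P_+^a - P_-^a = 0$. Hence, once the already-excluded trivial cases are set aside, namely the Hermitian condition $7\sin\alpha = \sin 3\alpha$ and the $z$-axis choice $\sin z = 0$, a nonzero signal at Bob's end is possible if and only if $\sin u \neq 0$, i.e. Bob's measurement carries a genuinely complex phase. Since $C_{yy} = -1$ for the Bell state and $C_{yy} = -p \neq 0$ for every nontrivial Werner state, the correlator never obstructs this, so the information gain hinges solely on the complexity of Bob's measurement, which is the claim. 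As a consistency check I would confirm that the complex choice $u = z = \pi/2$, corresponding to the $|\pm_y\rangle$ basis, reproduces the earlier Werner result $P_+ - P_- = 4p\sin\alpha/(-3+\cos 2\alpha)$.

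The computation is routine, so the real content is conceptual rather than technical: one must argue cleanly that the vanishing of $m_y$ is precisely what closes the $\cos u$ signalling channel that is open for generic states through the $m_y m'_x \cos u$ term, leaving only the $\sin u$ channel available for these highly symmetric states. The main point demanding care is fixing the basis convention for $|\psi^+\rangle$ so that $C_{yy}$ has an unambiguous value and sign, and then verifying that the degenerate possibilities $C_{yy} = 0$, $\sin z = 0$, and $7\sin\alpha = \sin 3\alpha$ coincide exactly with the cases already excluded in the Theorem.
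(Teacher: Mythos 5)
Your proposal is correct and follows essentially the same route as the paper: both specialize the master formula of Eq.~(\ref{eq_condikamal}) to the locally maximally mixed cases by setting $m_y=0$ and $C_{yy}=-p$ (or $-1$), obtain an expression proportional to $\sin u\,\sin z\,\sin\alpha$, and conclude that a real measurement at Bob's end ($\sin u=0$, or one along $z$) kills the signal. Your simplified residue $2C_{yy}(7\sin\alpha-\sin3\alpha)\sin u\sin z/(-3+\cos2\alpha)^2$ reduces, via $7\sin\alpha-\sin3\alpha=4\sin\alpha(1+\sin^2\alpha)$ and $-3+\cos2\alpha=-2(1+\sin^2\alpha)$, exactly to the paper's stated expressions, so the two arguments coincide.
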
 
\begin{proof}
 When $|\varphi\rangle$ at Bob's side clicks, the condition for the non-maximally entangled state is
\begin{equation}
P_+^a - P_-^a = \frac{8 \beta \gamma \sin u \sin z \sin \alpha}{\left(\beta^{2}+\gamma^{2}\right)(-3+\cos 2 \alpha)}
\end{equation}
while for the Werner states, it can be given by
\begin{equation}
P_+^a - P_-^a =\frac{4 p \sin u \sin z\sin \alpha}{-3+\cos 2 \alpha}
\end{equation}
(in Eq. (\ref{eq_condikamal}), putting \(m_y\) =0, and \(C_{yy} = -p\)). Clearly, information transfer cannot happen to Bob if the measurement at Bob's side is real, i.e., $u = 0$ or the measurement is along the \(z\)-direction or  $\alpha = n \pi$. 

For pure states, we also get the condition that when the state is a product state, information transfer cannot occur while for Werner states, the condition gives the state to be maximally mixed states. Interestingly, note that the information gain at Bob's end is not related to entanglement for mixed shared states since the Werner state is entangled with \(p > 1/3\) (see Appendix and Fig. \ref{fig_werner2}).   
\end{proof}
\subsubsection{Distinguishing Bob's states via trace distance after PT-symmetric evolution at Alice's node} 

To exhibit the results further, let us find the distance between Bob's state corresponding to Alice's action of $I$ or $\sigma_x$ on the shared state. It was shown that  for maximally entangled state \(|\psi^+\rangle\), at $\alpha = \frac{\pi}{2}$, Bob's state is $|\pm_y\rangle\langle\pm_y|$ corresponding to either $I$ or $\sigma_x$ operation on Alice's side \cite{lee14}, which indicates that  Bob can distinguish his own subsystem perfectly, thereby distinguishing Alice's operation with unit probability.  Here notice that the probability of success in distinguishing Alice's operation  also depends on  the probability involved in post-selection. In general, Bob's state may not always  be perfectly distinguishable, it can only be discriminated probabilistically, and hence  with a non-vanishing probability, Bob can gain information about Alice's randomly chosen operation. We also emphasize here that calculating trace distance is a quantifiable way of distinguishing Bob's states  when Alice's state has undergone $\mathcal{PT}$-symmetric evolution.

The trace distance between two density matrices, \(\rho\) and \(\sigma\) can quantify the maximum probability of minimum error discrimination between the states by the best quantum measurement strategy \cite{Helstrom}  and  is defined as

\begin{equation}
T(\rho_1, \rho_2)=\frac{1}{2} \operatorname{Tr}\left[\sqrt{(\rho_1-\rho_2)^{\dagger}(\rho_1-\rho_2)}\right]=\frac{1}{2} \sum_{i}\left|\lambda_{i}\right|,
\end{equation}
where \(\lambda_i\)s are the eigenvalues of the operator \(\sqrt{(\rho_1-\rho_2)^{2}}\). In this situation,  $\rho_1$  and \(\rho_2\) represent  the states of Bob when $I$  and \(\sigma_x\) are applied by Alice respectively. For an arbitrary two-qubit density matrix, it reduces to 
\begin{align}
T = & \left| \frac{\sqrt{C_{yy}^{2}+ m_x^{\prime 2} m_y^{2} - 2 C_{yy} m_y m_y^\prime + m_y^{2} m_y^{\prime 2}+ m_y^{2} m_z^{\prime 2}}}{\left(-1+2  m_y  \sin\alpha-\sin^{2} \alpha\right)\left(1+2 m_y \sin \alpha +\sin^{2} \alpha \right)}   \right| \nonumber \\
& \times | (\sin \alpha+\sin^{3} \alpha)|,
\end{align}
which again shows that if both $C_{yy}$ and $m_y$ vanish,  the distance vanishes. It also implies that if the shared state  has no imaginary component, the information about Alice's operations cannot be gathered by Bob, even probabilistically, thereby confirming Theorem 1. 

Moreover, we will show that  this approach of distinguishability to address the information transfer criteria with \(\mathcal{PT}\)-symmetric evolution can be easily generalized to higher dimensions.

\section{ Higher information transfer in Higher dimension}
\label{sec_high}

\begin{figure}[h]
\includegraphics[width=0.95\linewidth]{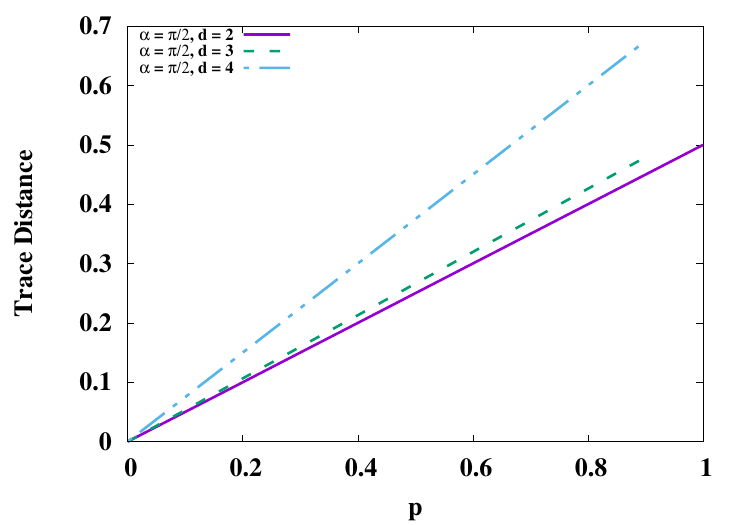}
\caption{Trace distance between the initial states and the state after $\mathcal{PT}$-symmetric evolution (vertical axis)) against $p$  (horizontal axis) of the Werner state, given by \(\rho^W = p |\psi^+\rangle \langle \psi^+  + (1 -p) I\) where \(|\psi^+\rangle\) is the maximally entangled state for a given dimension while \(I\) denotes the identity operator of that dimension.  We choose different \(\alpha\) values for demonstration. Both axis are dimensionless. }
\label{fig_higherdim}
\end{figure}

Up to now, we consider the shared states to be two-qubits. Let us now move to a higher-dimensional situation.   
The $\mathcal{PT}$-symmetric Hamiltonian, with spin-$1$ matrices, takes the form as \cite{highd}

\begin{equation}
\label{Spin1}
H=\frac{1}{\sqrt{2}}
\left(\begin{array}{lll}
i\sin \alpha & 1 & \quad 0 \\
1 & 0 & \quad 1 \\
0 & 1 & -i\sin \alpha\\
\end{array}\right).
\end{equation}
Suppose Alice and Bob share a two-qutrit density matrix, \(\rho^3 = \frac{1}{9}(I + \sum_{i=x, y, z} m_i S_i \otimes  I + m'_i I \otimes S_i + C_{ii} S_i \otimes S_i)\), where \(S_i\) denotes the spin-$1$ matrices, and \(m_i, m'_i, C_{ii}\) are correspondingly local and global correlators defined in terms of the spin operator, \(S_i\). 
If we  now compare Bob's state before and after applying $\mathcal{PT}$-symmetric evolution on Alice's side, we find that  they coincide when  $m_y$, $m'_y$, and $C_{yy}$ are vanishing even with  $\alpha \neq n\pi$ which is similar to the qubit case as shown in Appendix. However, unlike the qubit case, we also require $m_z$ and $C_{zz}$ to be non-zero in order to allow information transfer. 

Let us now  analyze the effects of dimensionality on distinguishability i.e., with the increase of dimension, we compute the trace distance  between states before and after $\mathcal{PT}$-symmetric evolution, for different \(\alpha\) values when the shared state is the Werner-like state in the respective dimension. We observe that the difference increases with the increase of the dimension as depicted in Fig. \ref{fig_higherdim}, when \(\alpha\) is chosen to be close to \(\pi/2\). Therefore,  it displays that at higher dimensions, information transfer can be more compared to the situation in lower dimensions.

\section{Genuine Randomness Generation via $\mathcal{PT}$-symmetric evolution} 
\label{sec_random}

Let us now describe how the  results obtained in the preceding section enable  us to devise a proof of principle  application of $\mathcal{PT}$-symmetric dynamics in  quantum information processing tasks, specifically, in the quantum cryptographic domain. It is known that in the classical regime, only pseudo-random numbers can be obtained while in the quantum domain, genuine randomness can be certified by using quantum no-go theorems \cite{pironio, shiladitya}.   In randomness generation, a shorter genuine random bit-string is generated from a pseudo-long random string \cite{renner} (cf. \cite{experiment}).   Toward  executing the protocol in our set-up,  we replace the short-delay quantum random pulse generator employed in the original protocol \cite{guang17} with a weak random number generator (WRNG) like Santha-Vazirani source \cite{santha, renner}. The choice is due to the fact that we want to show genuine random number generation (RNG), thereby showing a perfect quantum random  number source, and hence  if parties have initially quantum random number generator in their possession, the protocol does not make sense. Here we also assume that this source provided by a supplier  may be  an eavesdropper. 

Notice that although the design of the protocol is not new in the literature, it demonstrates a quantum information protocol based on $\mathcal{PT}$-symmetric dynamics.  The primary motivation for presenting this application is to establish a connection between the length of the generated random string and the non-vanishing nature of classical correlators and magnetization in the $y$-direction, namely \(m_y\), $m_y^\prime$ and \(C_{yy}\), as demonstrated in Theorem 1. This relation is indirectly established through the ability of Bob in distinguishing between the states before and after $\mathcal{PT}$-symmetric evolution on Alice's part. The distinguishability is proportional to the trace distance as illustrated in Fig. \ref{fig_higherdim} which is non-vanishing when the conditions stated in Theorem 1 are satisfied. To establish such a relation in an applicable context, we have chosen a randomness-generating protocol.\\
Let us first give the protocol   for a maximally entangled shared state and then we go for arbitrary shared two-qubit states.  Before starting the protocol, we assume that Alice and Bob's actions are predecided. Establishing shared randomness between two parties, Charu and Bob,  we require a third party, Alice, who has no knowledge about the random data. Suppose  Charu, \(C\) holds some random bit string in advance and wants Bob to have a share of it. Moreover, \(C\) has a limited power, i.e., only possesses single qubit gates. She, therefore, asks Alice to help her in sending the random bits to Bob but do not want Alice to learn the bit values. So Alice has to send random bits possessed by \(C\) to Bob without knowing it.\\
Charu encodes one bit of information obtained from WRNG on  her subsystem of the shared maximally entangled photon via applying $\sigma_x$ when she gets $0$ and does nothing when she has $1$.  After this operation, she sends her qubit to Alice. Using the open system approach, Alice jointly evolves her photon together with the auxiliary system by appropriate unitary  followed by post-selection on auxiliary subsystem (for details, see \cite{guang17}). After a short delay, Bob measures $\sigma_y$ on his part and records his outcome. 
After that, he receives a phone call from Alice and learns about the incident when she succeeds in the post-selection to obtain a particular outcome. Bob can keep only those outcomes and discard all others. At the end of the protocol on average, half of the time Alice  succeeds in simulating $\mathcal{PT}$-symmetric evolution in the open system setup, and Bob has a genuine random bit string which has a length half of the initial string of Alice obtained from Charu. Thus generated randomness is genuine as we do not allow any interference in the laboratories of two trusted parties including the source of entanglement.  

Instead of a maximally entangled state, if Alice and Bob share an arbitrary two-qubit density matrix, \(\rho\), such random number generation is probabilistically successful, and Bob can have a bit string which would have a length smaller than half the original one,  provided the classical correlator and magnetization in the $y$-direction, i.e.,  \(C_{yy}, m_y, m'_y\) are non-vanishing as shown in Theorem 1.  The length of the genuine random bit string would be related to the distinguishability of states as measured in the previous section.

\section{Conclusion}
\label{sec_conclu} 

Many intriguing features have been reported in the field of non-Hermitian quantum physics, particularly in $\mathcal{PT}$-symmetric quantum theory.  In particular, a system evolving under such a $\mathcal{PT}$ symmetric Hamiltonian shows a violation of the no-signaling condition. There are two approaches proposed in the literature to resolve the issue. Firstly, one can modify the inner product completely, i.e., one considers a $\mathcal{CPT}$-inner product. Secondly,  one can resolve this issue by embedding  the $\mathcal{PT}$-symmetric Hamiltonian  into a bigger Hermitian space, which was also experimentally tested. In the latter setup, an information transfer occurs after post-selection, thereby avoiding violation of the no-signalling principle.

In this work, we proved that for arbitrary two-qubit shared states, \(\mathcal{PT}\)-symmetric evolution is not sufficient for the nontrivial information transfer about the operations performed by spatially separated parties. Specifically, we found that for information transmission,  it is necessary that the shared states or the measurements must contain complex numbers. For example, considering the non-maximally entangled pure state and the Werner state, we demonstrated that information can only be obtained when either the state is not product or maximally mixed or the measurement has a complex component.  We indicated that information content about the operations increases with the increase of dimension. We then showed that such a consideration has a direct connection with designing a quantum random number generator from a  pseudo-random bit-string. In particular, we showed that systems under \(\mathcal{PT}\)-symmetric evolution can indeed generate shared randomness which has direct implications in quantum state certification as well as quantum cryptographic protocols.

 \section*{acknowledgement}
We acknowledge the support from the Interdisciplinary Cyber Physical Systems (ICPS) program of the Department of Science and Technology (DST), India, Grant No.: DST/ICPS/QuST/Theme- 1/2019/23 and SM acknowledges Ministry of Science and Technology(Grant no. MOST 111-2124-M-002-013), Taiwan.

\section{Data availability}
\label{data_avail}
The data that support the findings of this study are openly available at the following URL: https://github.com/GaneshChandra/Necessary-condition-for-information-transfer-under-simulated-PT-symmetric-evolution 

\section{References}

\onecolumn
\appendix

\section{Information transfer using an arbitrary shared state}
\label{non-maximal_calc}

Let us consider that the shared state between Alice and Bob is an arbitrary two-qubit density matrix, \( \rho (m_i, m'_i, C_{ii})\) as given in Eq. (\ref{ArbiMix}). 
When Alice wants to convey the classical bit, $0$,  she acts $A_+ = I$ on the state, while she performs $A_- = \sigma_x$  in case of $1$.  The corresponding state becomes $\rho_+ = (I \otimes I)\rho(I \otimes I)^\dagger$ and  $\rho_- = (\sigma_x \otimes I)\rho(\sigma_x \otimes I)^\dagger$ for  $A_+$ and $A_- $ respectively.
Now, Alice evolves her qubit with a $\mathcal{PT}$-symmetric Hamiltonian following Eq. (\ref{timeEvol}) as $\rho_\pm(t)=U(t)\rho_{\pm} U^\dagger(t)$ 
where
\begin{equation}
    \rho_+ (t) = \frac{1}{\text{norm}} \left(
\begin{array}{cccc}
 a & b & c & d \\
 e & f & g & h \\
 i & j & k & l \\
 m & n & o & p \\
\end{array}
\right),
\end{equation}
where
\begin{align*}
&a=\frac{1}{8} \left(-\cos (2 \alpha )-\cos (2 \alpha ) m_z^{\text{aa}}+3 m_z^{\text{aa}}-\cos (2 \alpha ) C_{\text{zz}}-C_{\text{zz}}+4 \sin (\alpha ) m_y-\cos (2 \alpha ) m_z-m_z+3\right), \\
&b= \frac{1}{8} \left(-\cos (2 \alpha ) m_x^{\prime}+3 m_x^{\prime}+i \cos (2 \alpha ) m_y^{\prime}-3 i m_y^{\prime}-4 i \sin (\alpha ) C_{\text{yy}}\right), \\
&c= \frac{1}{8} \left(4 i \sin (\alpha )+4 i \sin (\alpha ) m_z^{\prime}+\cos (2 \alpha ) m_x+m_x-i \cos (2 \alpha ) m_y+3 i m_y\right), \\
&d=\frac{1}{8} \left(4 i \sin (\alpha ) m_x^{\prime}+4 \sin (\alpha ) m_y^{\prime}+\cos (2 \alpha ) C_{\text{xx}}+C_{\text{xx}}-\cos (2 \alpha ) C_{\text{yy}}+3 C_{\text{yy}}\right),\\
&e= \frac{1}{8} \left(-\cos (2 \alpha ) m_x^{\prime}+3 m_x^{\prime}-i \cos (2 \alpha ) m_y^{\prime}+3 i m_y^{\prime}+4 i \sin (\alpha ) C_{\text{yy}}\right), \\
&f= \frac{1}{8} \left(-\cos (2 \alpha )+\cos (2 \alpha ) m_z^{\prime}-3 m_z^{\prime}+\cos (2 \alpha ) C_{\text{zz}}+C_{\text{zz}}+4 \sin (\alpha ) m_y-\cos (2 \alpha ) m_z-m_z+3\right), \\
&g = \frac{1}{8} \left(4 i \sin (\alpha ) m_x^{\prime}-4 \sin (\alpha ) m_y^{\prime}+\cos (2 \alpha ) C_{\text{xx}}+C_{\text{xx}}+\cos (2 \alpha ) C_{\text{yy}}-3 C_{\text{yy}}\right),\\
&h= \frac{1}{8} \left(4 i \sin (\alpha )-4 i \sin (\alpha ) m_z^{\prime}+\cos (2 \alpha ) m_x+m_x-i \cos (2 \alpha ) m_y+3 i m_y\right), \\
&i= \frac{1}{8} \left(-4 i \sin (\alpha )-4 i \sin (\alpha ) m_z^{\prime}+\cos (2 \alpha ) m_x+m_x+i \cos (2 \alpha ) m_y-3 i m_y\right), \\
&j= \frac{1}{8} \left(-4 i \sin (\alpha ) m_x^{\prime}-4 \sin (\alpha ) m_y^{\prime}+\cos (2 \alpha ) C_{\text{xx}}+C_{\text{xx}}+\cos (2 \alpha ) C_{\text{yy}}-3 C_{\text{yy}}\right), \\
&k= \frac{1}{8} \left(-\cos (2 \alpha )-\cos (2 \alpha ) m_z^{\prime}+3 m_z^{\prime}+\cos (2 \alpha ) C_{\text{zz}}+C_{\text{zz}}+4 \sin (\alpha ) m_y+\cos (2 \alpha ) m_z+m_z+3\right), \\
&l = \frac{1}{8} \left(-\cos (2 \alpha ) m_x^{\prime}+3 m_x^{\prime}+i \cos (2 \alpha ) m_y^{\prime}-3 i m_y^{\prime}-4 i \sin (\alpha ) C_{\text{yy}}\right),\\
&m= \frac{1}{8} \left(-4 i \sin (\alpha ) m_x^{\prime}+4 \sin (\alpha ) m_y^{\prime}+\cos (2 \alpha ) C_{\text{xx}}+C_{\text{xx}}-\cos (2 \alpha ) C_{\text{yy}}+3 C_{\text{yy}}\right), \\
&n= \frac{1}{8} \left(-4 i \sin (\alpha )+4 i \sin (\alpha ) m_z^{\prime}+\cos (2 \alpha ) m_x+m_x+i \cos (2 \alpha ) m_y-3 i m_y\right), \\
&o = \frac{1}{8} \left(-\cos (2 \alpha ) m_x^{\prime}+3 m_x^{\prime}-i \cos (2 \alpha ) m_y^{\prime}+3 i m_y^{\prime}+4 i \sin (\alpha ) C_{\text{yy}}\right), \\
&p = \frac{1}{8} \left(-\cos (2 \alpha )+\cos (2 \alpha ) m_z^{\prime}-3 m_z^{\prime}-\cos (2 \alpha ) C_{\text{zz}}-C_{\text{zz}}+4 \sin (\alpha ) m_y+\cos (2 \alpha ) m_z+m_z+3\right), \\
&\text{norm} = \sin ^2(\alpha )+2 \sin (\alpha ) m_y+1. 
\end{align*}
 We arrive at the similar expression for 
\begin{equation}
    \rho_{-} (t) = \frac{1}{\text{norm}^{\prime}} \left(
\begin{array}{cccc}
 a' & b' & c' & d' \\
 e' & f' & g' & h' \\
 i' & j' & k' & l' \\
 m' & n' & o' & p' \\
\end{array}
\right).
\end{equation}
where
\begin{align*}
& a^\prime = \frac{1}{4} \left(\sin ^2(\alpha )+\left(\sin ^2(\alpha )+1\right) m_z^{\prime}+\cos ^2(\alpha ) \left(C_{\text{zz}}+m_z\right)-2 \sin (\alpha ) m_y+1\right),\\
 & b^\prime = \frac{1}{4} \left(\left(\sin ^2(\alpha )+1\right) \left(m_x^{\prime}-i m_y^{\prime}\right)+2 i \sin (\alpha ) C_{\text{yy}}\right),\\
  & c^\prime = \frac{1}{4} \left(2 i \sin (\alpha ) \left(m_z^{\prime}+1\right)+\cos ^2(\alpha ) m_x-i \left(\sin ^2(\alpha )+1\right) m_y\right), \\
  &d^\prime = \frac{1}{8} \left(4 \sin (\alpha ) \left(m_y^{\prime}+i m_x^{\prime}\right)+2 \cos ^2(\alpha ) C_{\text{xx}}+(\cos (2 \alpha )-3) C_{\text{yy}}\right), \\
&e^\prime = \frac{1}{4} \left(\left(\sin ^2(\alpha )+1\right) \left(m_x^{\prime}+i m_y^{\prime}\right)-2 i \sin (\alpha ) C_{\text{yy}}\right),\\
 & f^\prime = \frac{1}{4} \left(\sin ^2(\alpha )-\left(\left(\sin ^2(\alpha )+1\right) m_z^{\prime}\right)+\cos ^2(\alpha ) \left(m_z-C_{\text{zz}}\right)-2 \sin (\alpha ) m_y+1\right),\\
  & g^\prime =  \frac{1}{4} \left(2 i \sin (\alpha ) \left(m_x^{\prime}+i m_y^{\prime}\right)+\cos ^2(\alpha ) C_{\text{xx}}+\left(\sin ^2(\alpha )+1\right) C_{\text{yy}}\right),\\
   & h^\prime = \frac{1}{4} \left(\cos ^2(\alpha ) m_x-i \left(2 \sin (\alpha ) \left(m_z^{\prime}-1\right)+\left(\sin ^2(\alpha )+1\right) m_y\right)\right), \\
 &i^\prime =\frac{1}{4} \left(\cos ^2(\alpha ) m_x+i \left(\left(\sin ^2(\alpha )+1\right) m_y-2 \sin (\alpha ) \left(m_z^{\prime}+1\right)\right)\right), \\
  &j^\prime = \frac{1}{4} \left(-2 i \sin (\alpha ) \left(m_x^{\prime}-i m_y^{\prime}\right)+\cos ^2(\alpha ) C_{\text{xx}}+\left(\sin ^2(\alpha )+1\right) C_{\text{yy}}\right),\\
   & k^\prime = \frac{1}{4} \left(\sin ^2(\alpha )+\left(\sin ^2(\alpha )+1\right) m_z^{\prime}-\cos ^2(\alpha ) \left(C_{\text{zz}}+m_z\right)-2 \sin (\alpha ) m_y+1\right),\\
   & l^\prime =  \frac{1}{4} \left(\left(\sin ^2(\alpha )+1\right) \left(m_x^{\prime}-i m_y^{\prime}\right)+2 i \sin (\alpha ) C_{\text{yy}}\right), \\ 
 &m^\prime =\frac{1}{8} \left(4 \sin (\alpha ) \left(m_y^{\prime}-i m_x^{\prime}\right)+2 \cos ^2(\alpha ) C_{\text{xx}}+(\cos (2 \alpha )-3) C_{\text{yy}}\right), \\
 &n^\prime = \frac{1}{4} \left(\cos ^2(\alpha ) m_x+i \left(2 \sin (\alpha ) \left(m_z^{\prime}-1\right)+\left(\sin ^2(\alpha )+1\right) m_y\right)\right), \\ 
 &o^\prime = \frac{1}{4} \left(\left(\sin ^2(\alpha )+1\right) \left(m_x^{\prime}+i m_y^{\prime}\right)-2 i \sin (\alpha ) C_{\text{yy}}\right), \\
 &p^\prime = \frac{1}{4} \left(\sin ^2(\alpha )-\left(\left(\sin ^2(\alpha )+1\right) m_z^{\prime}\right)+\cos ^2(\alpha ) \left(C_{\text{zz}}-m_z\right)-2 \sin (\alpha ) m_y+1\right), \\
 & \text{and} \,\text{norm}^\prime = \sin ^2(\alpha )-2 \sin (\alpha ) m_y+1.
\end{align*}
 Alice  locally  performs measurements in the basis
 $\{|\phi\rangle\langle\phi|,  |\phi^\perp\rangle\langle\phi^\perp|\}$ with $|\phi\rangle = \cos \frac{y}{2}|0\rangle+e^{iv} \sin \frac{y}{2}|1\rangle$ and $|\phi^{\perp}\rangle = \sin \frac{y}{2}|0\rangle - e^{-iv} \cos \frac{y}{2}|1\rangle$ while Bob measures
 $\{|\varphi\rangle \langle \varphi|, |\varphi^\perp \rangle \langle \varphi^\perp|\} $ with $|\varphi\rangle = \cos \frac{z}{2} |0\rangle+e^{iu} \sin \frac{z}{2} |1\rangle$ and similarly \(|\varphi^\perp \rangle\).   

From this, we construct the difference, $P_+ -P_-$, which serves as an appropriate figure of merit for the information transfer between Alice and Bob. Specifically, we obtain
\begin{align}
     P_\pm &=\mbox{Tr}( |\phi\rangle\langle \phi|\otimes|\varphi^{\perp}  \rangle\langle\varphi^{\perp} |\rho_\pm(t)) \nonumber \\
     &+ \mbox{Tr}( |\phi^{\perp}\rangle\langle \phi^{\perp}|\otimes|\varphi^{\perp} \rangle\langle\varphi^{\perp}|\rho_\pm(t)),   
\end{align}
where
\begin{align*} 
    &P_+ =\\ 
    & -\frac{\cos 2 \alpha -4 \sin \alpha  (C_{yy} \sin u \sin z + \text{m}_y)+ A \sin z \left(\text{m}_x^\prime \cos u + \text{m}_y^\prime \sin u\right)+ A \text{m}_z^\prime \cos z-3}{4 \left(\sin ^2 \alpha +2 \text{m}_y \sin \alpha +1\right)},
\end{align*}
and
\begin{align*}
    &P_- =\\
    &\frac{\cos 2 \alpha +4 \sin \alpha  (\text{C}_{yy} \sin u \sin z+ m_y +A \sin z \left(m_x^\prime \cos u+m_y^\prime \sin u \right)+\cos 2 \alpha  m_z^\prime \cos z-3 m_z^\prime \cos z-3}{2 (\cos 2 \alpha +4 m_y
 \sin \alpha -3)},
\end{align*}
with \(A =\cos 2 \alpha -3\).

\section{Effects of Alice's operations on Bob's reduced state}
\label{sec_effectreduce}
 
  \begin{figure}[H]
\includegraphics[width=0.95\linewidth]{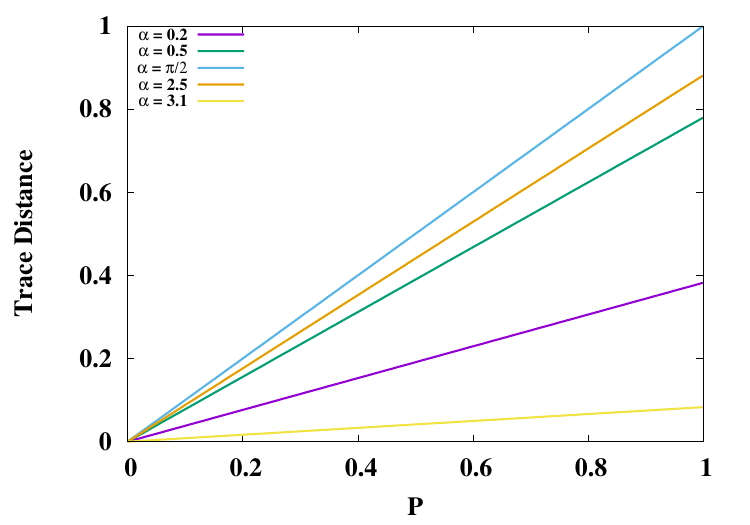}
\caption{ Trace distance  between Bob's state before and after \(\mathcal{PT}\)-symmetric dynamics (ordinate) vs. $p$ (abscissa) of the two-qubit Werner state. Different \(\alpha\) of $\mathcal{PT}$-symmetric   evolution are chosen. Both the axes are dimensionless. }
\label{fig_werner2}
\end{figure}

 There is a straight forward way to check whether the Bob can draw information about Alice's operation, by comparing Bob's state before and after local $\mathcal{PT}$-symmetric Hamiltonian as also shown in Sec. \ref{sec_high}.
 
 \emph{Case 1. Non-maximally entangled states. }
 When the shared state is non-maximally entangled,  Bob's state ($\rho_B$)  is initially
 \begin{equation}
 \rho_B = \left(\begin{array}{cc}
\frac{1}{2} & 1-\frac{2 \beta^{2}}{\beta^{2}+\gamma^{2}} \\
1-\frac{2 \beta^{2}}{\beta^{2}+\gamma^{2}} & \frac{1}{2}
\end{array}\right)
\end{equation}
while 
after the application of   local $\mathcal{PT}$-symmetric Hamiltonian on Alice's subsystem,  Bob's state becomes
\begin{equation}
\rho_B^\prime = \left(\begin{array}{ccc}
\frac{1}{2} &  U^{nm} \\
\bar{U^{nm}} & \frac{1}{2}
\end{array}\right),
\end{equation}
 where \(U^{nm}=\frac{(\beta^2-\gamma^2)(-3+{\cos}(2 \alpha))+8i \beta \gamma \sin (\alpha)}{4\left(\beta^{2}+\gamma^{2}\right)\left(1+\sin (\alpha)^{2}\right)}\)
The condition that  $\rho_B^\prime = \rho_B$ leads to the condition $\gamma = 0$ even when $\alpha \neq n\pi$, which is similar to the condition  reached in Sec. \ref{sec_effectasol}.

\emph{ Case 2. Mixed states.}
For Werner state, Bob's state after $\mathcal{PT}$-symmetry dynamics reads

\begin{equation}
\left(\begin{array}{cc}
\frac{1}{2} & \frac{\text { ì p Sin }(\alpha)}{1+\sin (\alpha)^{2}} \\
\frac{2 \text { i } p\sin(\alpha)}{-3+\cos (2 \alpha)} & \frac{1}{2}
\end{array}\right). 
\end{equation}
from its initially maximally mixed subsystem. 
We plot the trace distance between these two states and find that it vanishes only when $p =0$ and the maximum distance is achieved when \(\alpha = n \pi/2\) (see Fig. \ref{fig_werner2}).  

 Interestingly, we find the similar condition obtained in Eq. (\ref{eq_condikamal}) if we compare Bob's state before the dynamics, 
 \begin{equation}
\rho_B = \left(\begin{array}{rr}
\frac{1}{2}+\frac{m_z^{\prime}}{2} & \frac{m_x^{\prime}}{2}-\frac{i m_y^{\prime}}{2} \\
\frac{m_x^{\prime}}{2}+\frac{i m_y^{\prime}}{2} & \frac{1}{2}-\frac{m_z^{\prime}}{2}
\end{array}\right)
\end{equation}
and after Alice applies local $\mathcal{PT}$-symmetric operation,  
 \begin{equation}
\rho_B^\prime = \left(\begin{array}{ccc}
R_+ & U\\
\bar{U} & R_{-}
\end{array}\right)
\end{equation}
with \begin{equation}
R_{\pm} =\frac{1}{2} \left( 1+\frac{\left(1 \pm \sin (\alpha)^{2}\right) {m_z}^{\prime}} {1+2 {m_y} \sin (\alpha)+\sin (\alpha)^{2}}\right),
\end{equation}
and 
\begin{equation}
    U = \frac{\left(1+\sin (\alpha)^{2}\right) {m_x}^{\prime}-{i}\left(2 C_{yy} \sin \alpha+\left(1+\sin (\alpha)^{2}\right) {m_y}^{\prime}\right)}{2\left(1+2 {m_y} \sin \alpha +\sin^{2} \alpha\right)}.
    \end{equation}
Comparing each element of the matrix, we find that 
\(R_{\pm} \) reduces to \(\frac{1}{2} \pm \frac{m_z^{\prime}}{2} \)
when \(m_y\) vanishes while   if  both \(C_{yy}\) and \(m_y\) are vanishing, \(U\) reduces to the off-diagonal term of \(\rho_B\), thereby arriving to the same condition as obtained in Theorem 1.  
\end{document}